\newtheorem{defn}{Definition}
\newtheorem{thm}{Theorem}[section]
\newtheorem{cor}[thm]{Corollary}
\newtheorem{prop}{Proposition}
\newtheorem{lem}[thm]{Lemma}
\newtheorem{conj}[thm]{Conjecture}
\newtheorem{constr}[thm]{Construction}
\newtheorem{note}{Remark}
\newtheorem{eg}{Example}
\newcommand{\bit}{\begin{itemize}}
\newcommand{\eit}{\end{itemize}}
\newcommand{\bcor}{\begin{cor}}
\newcommand{\ecor}{\end{cor}}
\newcommand{\beq}{\begin{equation}}
\newcommand{\eeq}{\end{equation}}
\newcommand{\beqn}{\begin{equation*}}
\newcommand{\eeqn}{\end{equation*}}
\newcommand{\bea}{\begin{eqnarray}}
\newcommand{\eea}{\end{eqnarray}}
\newcommand{\bean}{\begin{eqnarray*}}
\newcommand{\eean}{\end{eqnarray*}}
\newcommand{\ben}{\begin{enumerate}}
\newcommand{\een}{\end{enumerate}}
\newcommand{\bc}{\begin{center}}
\newcommand{\ec}{\end{center}}
\newcommand{\bdefn}{\begin{defn}}
\newcommand{\edefn}{\end{defn}}
\newcommand{\bnote}{\begin{note}}
\newcommand{\enote}{\end{note}}
\newcommand{\bprop}{\begin{prop}}
\newcommand{\eprop}{\end{prop}}
\newcommand{\blem}{\begin{lem}}
\newcommand{\elem}{\end{lem}}
\newcommand{\bthm}{\begin{thm}}
\newcommand{\ethm}{\end{thm}}
\newcommand{\bconj}{\begin{conj}}
\newcommand{\econj}{\end{conj}}
\newcommand{\bconstr}{\begin{constr}}
\newcommand{\econstr}{\end{constr}}
\newcommand{\bpf}{\begin{proof}}
\newcommand{\epf}{\end{proof}}
\newcommand{\ux}[1]{\ensuremath{\underline{x}_{{#1}}}}
\newcommand{\uxx}{\ensuremath{\underline{x} }}
\newcommand{\hmds}{\ensuremath{H_{\text{MDS}} \ }}
\newcommand{\code}{\ensuremath{{\cal C} \ }}
\newcommand{\dmin}{\ensuremath{d_{\min}\ }}
\begin{document}
\sloppy
\title{On Partial Maximally-Recoverable and Maximally-Recoverable Codes} 

\author{
  \IEEEauthorblockN{S. B. Balaji and P. Vijay Kumar, \it{Fellow},\it{IEEE}}
  \IEEEauthorblockA{Department of Electrical Communication Engineering, Indian Institute of Science, Bangalore.  \\ Email: balaji.profess@gmail.com, vijay@ece.iisc.ernet.in} 
\thanks{This work is supported in part by the National Science Foundation under Grant No. 0964507 and in part by the NetApp Faculty Fellowship program.}
}
\maketitle

\begin{abstract}
An $[n,k]$ linear code ${\cal C}$ that is subject to locality constraints imposed by a parity check matrix $H_0$ is said to be a maximally recoverable (MR) code if it can recover from any erasure pattern that some $k$-dimensional subcode of the null space of $H_0$ can recover from.   The focus in this paper is on MR codes constrained to have all-symbol locality $r$.  Given that it is challenging to construct MR codes having small field size, we present results in two directions. In the first, we relax the MR constraint and require only that apart from the requirement of being an optimum all-symbol locality code, the code must yield an MDS code when punctured in a single, specific pattern which ensures that each local code is punctured in precisely one coordinate and that no two local codes share the same punctured coordinate.   We term these codes as partially maximally recoverable (PMR) codes.  We provide a simple construction for high-rate PMR codes and then provide a general, promising approach that needs further investigation.    In the second direction, we present three constructions of MR codes with improved parameters, primarily the size of the finite field employed in the construction. 
\end{abstract}

\begin{IEEEkeywords} Distributed storage, codes with locality, maximally recoverable codes, partial-MDS codes.
\end{IEEEkeywords}

\section{Introduction\label{sec:intro}}

In a distributed storage network, each file is regarded as a message, encoded into a codeword by adding redundancy, and stored in the network.  Each code symbol is typically placed on a different node to provide resiliency against node failure.   Both replication and Reed-Solomon codes are commonly employed to protect data but have their drawbacks.  While replication incurs large overhead, RS codes are inefficient when it comes to node repair.   The notion of codes with locality introduced in \cite{GopHuaSimYek}, was motivated in part, by this shortcoming of an RS code. 

\subsection{Codes with Locality}

\begin{defn} \cite{GopHuaSimYek}
An $[n,k]$ code ${\cal C}$ of block length $n$ and dimension $k$ is said to have all-symbol locality $r$ if for every code symbol $c_i$ in ${\cal C}$, the dual code ${\cal C}^{\perp}$ contains a codeword with support $L_i$ satisfying $i \in L_i$ and $|L_i| \leq (r+1)$.  We will call $L_i$ the recovery set for code symbol $i$.  We assume w.l.o.g. that $L_i \not \subset \cup_{j \in [n], \ j \neq i} L_j$. We will write $[n,k]_r$ to indicate an $[n,k]$ code with such all-symbol locality $r$ and $[n,k,d]_r$ if the code has minimum distance $d$. \end{defn}

Codes with all-symbol locality have the property that the number of code symbols that need to be accessed to repair a failed node is at most $r$.  The following bound on the minimum distance under a weaker notion called information-symbol locality was derived in \cite{GopHuaSimYek}:
\bea
d_{\min} & \leq & (n-k+1) -\left( \lceil \frac{k}{r} \rceil -1 \right). \label{eq:dmin} 
\eea
The same bound also applies to codes with all-symbol locality and is often (but not always) tight, see \cite{PraLalKum_isit} for instance.   The Pyramid codes introduced in~\cite{HuaCheLi} are shown in \cite{GopHuaSimYek} to be an example of codes with information-symbol locality that are optimal with respect to this bound.    The existence of code with all-symbol locality was established in \cite{GopHuaSimYek} for the case when $(r+1) \mid n$. Codes with locality also go by the names locally repairable codes~\cite{PapDim} or local reconstruction codes~\cite{HuaSimXu_etal_azure}.

A class of codes with all-symbol locality known as {\em homomorphic self-repairing codes} were constructed in \cite{OggDat} with the aid of linearized polynomials.  An example provided in \cite{OggDat} is optimal with respect to the bound in
\eqref{eq:dmin}. A general construction of optimal codes with all-symbol locality is provided in \cite{SilRawVis}, that is based on the construction of Gabidulin maximum rank-distance codes.  An upper bound on minimum distance, similar to that in \eqref{eq:dmin}, was derived in \cite{PapDim}, that applies also to non-linear codes.  Also provided, in \cite{PapDim}, is an explicit construction of a class of linear, optimal all-symbol locality codes possessing a vector alphabet.  This construction is related to an earlier construction in \cite{PapLuoDimHuaLi}, of codes termed as simple regenerating codes.   Most recently, Tamo and Barg ~\cite{TamBar} have provided general constructions for optimal codes with all-symbol locality.   

\subsection{Maximally Recoverable Codes} 

The notion of a maximally recoverable code is most easily defined in terms of the generator matrix $G$ of the code.

Let ${\cal C}$ be an $[n,k]_r$ code that satisfies the all-symbol, locality-$r$ constraints imposed by a parity-check matrix $H_0$.  Let ${\cal C}_0$ denote the null space of $H_0$ and $G_0$ be the corresponding generator matrix.  Then ${\cal C}$ is said to be an MR code with respect to $H_0$ if for any collection of $k$ linearly independent columns in $G_0$, the corresponding columns of $G$ are also linearly independent.  

 The construction of optimum codes with locality given in \cite{TamBar}, has field size on the order of block length.  A principal code constructed in their paper corresponds to a subcode of an RS code.  The coordinates of this code are grouped together in accordance with cosets of a cyclic subgroup of the group of $n$th roots of unity.  The subcode of the RS code is selected so that the restriction of the RS code to a coset of size $(r+1)$ corresponds to evaluation of a polynomial of degree $(r-1)$, thus providing locality. The degree of the encoding polynomials is shown to be such that the resulting codes are optimal with respect to the minimum distance bound in \eqref{eq:dmin}. The authors in \cite{Chen} define a general notion of maximal recoverable codes and provide a construction for maximally recoverable codes of field size ${n-1 \choose k-1}$ .  In \cite{BlaHafHet}, a general form of parity-check matrix was considered with the aim of constructing MR codes.  These codes are referred to in \cite{BlaHafHet} as partial MDS codes. The authors provide conditions under which the proposed form of parity-check matrix defines an MR code and identify explicit parameter sets for which their construction results in an MR code. A particular instance of their construction has field size $O(2^n)$, where $n$ in the block length of the code. For the case of a single global parity check, the authors provide a construction where the field size is $O(n)$. 

 The authors of \cite{Plank}, construct codes termed as sector-disk (SD) codes.  These are codes which for certain puncturing patterns associated to a combination of disk and sector failures result in MDS codes. The authors provide a construction for the case of $2$ global parities for handling the correction of a single or double erasure in each local code and present a parameter range for which their construction satisfies the requirement of an SD code through computer search.  In \cite{Blaum_1}, the authors present a construction for maximally recoverable codes with $2$ global parities with field size of $O(n)$ that can handle single erasures through local error correction. In  \cite{Blaum_2}, a construction of SD codes with 2 global parities is provided having field size of $O(n)$ to handle one or two erasures in each local code. This was subsequently strengthened in  \cite{Blaum_3}, where a construction of SD code and partial MDS code was provided for 2 global parities having field size of $O(n)$ that can handle any number of erasures through local error correction. 

 In \cite{GopHuaJenYek}, a family of explicit, MR codes for single local erasure correction is provided in which the number of global parities can be arbitrary.   It is assumed here that $(r+1)\mid n$ where $r$ is the locality parameter of the code. The parity check matrix in \cite{GopHuaJenYek} has the same form as in \cite{BlaHafHet} except that the authors use variables to fill up the entries of the parity check matrix and then proceed to derive conditions needed to be satisfied by these variables in order to yield an MR code.
 
In \cite{Li_1}, a relaxation in the definition of an MR code is proposed.  Here the authors seek to correct a select set of erasure patterns.  Each codeword is put into matrix form in such a way that each row corresponds to a local code.   A vector is used to specify the number of columns of this code matrix  in which erasure can occur, the maximum number of erasures allowed within each column as well as the maximum number of complete column erasures permitted.  A construction satisfying these requirements is provided.  

In the present paper, a relaxation of the MR criterion termed as a partial maximally recoverable (PMR) criterion is presented and a simple, high-rate construction provided.  Also contained in the paper are three constructions of MR codes with improved parameters, primarily field size.

\section{Partial Maximum Recoverability} \label{sec:PMR} 

Given that the construction of MR codes having small field size is challenging, we seek here to construct codes that satisfy a weaker condition which we will refer to in this paper as the partial maximally recoverable (PMR) condition.  Let ${\cal C}$ be an $[n,k]_r$ code having all-symbol locality and whose minimum distance satisfies the bound in \eqref{eq:dmin} with equality.  Let $L_i$ denote the recovery sets.  In the context of PMR codes, an admissible puncturing pattern $\{e_1,e_2,\cdots, e_m\}$ is one in which the $\{e_i\}$ satisfy the condition:
\bean
e_i & \in & L_i \setminus \left( \bigcup_{j \in [m], \  j \neq i} L_j \right).
\eean

A PMR code is then defined simply as an optimal all-symbol locality code which becomes an MDS code upon puncturing under some admissible puncturing pattern.  The parity-check matrix of a PMR code is characterized below.   We assume w.l.o.g. in the section below, that $\{e_1,e_2,\cdots,e_m)=(1,2,\cdots,m)$ through symbol reordering. 

%
%
%

\subsection{Characterizing $H$ for a PMR Code} \label{sec:characerizing_PMR}

\begin{thm} \label{thm:canonical_form} 
Let ${\cal C}$ be a PMR code as defined above for admissible puncturing pattern $e=\{e_1,...,e_m\}$. Then \code can be assumed to have parity-check matrix of the form: 
\bean
H & = & \left[ \begin{array}{c|c} 
I_m & \underbrace{F}_{(m \times k_0)}   \\ \hline 
[0]& \underbrace{\hmds}_{(\Delta \times k_0)} \end{array} \right],
\eean
where \hmds is the parity-check matrix of an $[k_0,k_0-\Delta]$ MDS code and $F$ is of the form: 
\bean
F & = & 
\left[ \begin{array}{c} 
\ux{1}^t  \\
\ux{2}^t  \\
\ddots  \\
\ux{m}^t  \end{array} \right]
\eean
in which each \ux{i} is a vector of Hamming weight at most $r$.
\end{thm}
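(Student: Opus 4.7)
The plan is to build the claimed form of $H$ in two stages: first, extract $m$ specific local parity checks dictated by the locality constraint and the admissibility of the puncturing pattern; then use the PMR hypothesis to recognize the remaining block as the parity-check matrix of an MDS code.

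First I would exploit locality. For each $i\in[m]$, symbol $i$ (after the reordering $e_i = i$) lies in a recovery set $L_i$ with $|L_i|\le r+1$, so there is a vector $h_i\in{\cal C}^{\perp}$ whose support equals $L_i$. Admissibility forces $L_i\cap\{1,\ldots,m\}=\{i\}$, so the restriction of $h_i$ to the first $m$ coordinates is supported only at position $i$. Since $i\in\operatorname{supp}(h_i)$, I can rescale $h_i$ so that its $i$-th entry equals $1$; the remaining nonzero entries lie in $L_i\setminus\{i\}\subseteq\{m+1,\ldots,n\}$, and there are at most $r$ of them. Collecting these tails as the rows $\ux{i}^{t}$ of a matrix $F$ yields a block $[I_m\mid F]$ of the prescribed shape, with each $\ux{i}$ of Hamming weight at most $r$.

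Next I would extend these $m$ rows to a full parity-check matrix of $\cal C$. The rows $h_1,\ldots,h_m$ are automatically linearly independent by virtue of the $I_m$ block, so one can adjoin $\Delta = (n-k)-m$ further vectors from ${\cal C}^{\perp}$ to reach a basis. Row-reducing these additional rows against the top $m$ to clear their first $m$ columns produces a bottom block of the form $[0\mid\hmds]$, with $\hmds$ of size $\Delta\times k_0$ and $k_0 = n-m$, putting $H$ in the displayed form.

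Finally, I would argue that $\hmds$ is the parity check of an $[k_0,k_0-\Delta]$ MDS code. The code cut out on coordinates $m+1,\ldots,n$ by $\hmds$ coincides with the punctured code $\pcode$: any $(c_{m+1},\ldots,c_n)$ annihilated by $\hmds$ extends uniquely to a codeword of $\cal C$ via $c_i = -\ux{i}^{t}(c_{m+1},\ldots,c_n)^{t}$ for $i\in[m]$, and conversely every codeword of $\cal C$ restricts to one of this code. By the PMR hypothesis $\pcode$ is MDS, hence so is the code with parity check $\hmds$. The only non-routine point is this identification of $\pcode$ with the null space of $\hmds$; the weight-$r$ bound on the $\ux{i}$ and the existence of the $I_m$ block are immediate once the correct $m$ dual codewords are picked.
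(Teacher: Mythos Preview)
Your proof is correct and follows the same line as the paper's, which also row-reduces $H$ into the displayed block form and then identifies the bottom-right block as the parity check of the punctured (hence MDS) code. The only differences are cosmetic: the paper invokes the standard fact that the dual of a punctured code is the shortening of the dual in the same coordinates, whereas you verify this identification directly; your argument also makes explicit the derivation of the $[I_m\mid F]$ block and the weight-$r$ bound on the rows of $F$ from the locality and admissibility hypotheses, which the paper's proof leaves to the reader.
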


\begin{proof}
Clearly, $H$ can be assumed to be of the form 
\bean
H & = & \left[ \begin{array}{c|c} 
I_m & \underbrace{F}_{(m \times k_0)}   \\ \hline 
H_1 & \underbrace{H_2}_{(\Delta \times k_0)} \end{array} \right],
\eean
which can be transformed, upon row reduction to the form:
\bean
H & = & \left[ \begin{array}{c|c} 
I_m & \underbrace{F}_{(m \times k_0)}   \\ \hline 
[0] & \underbrace{H_3}_{(\Delta \times k_0)} \end{array} \right].
\eean
It is desired that upon puncturing the first $m$ coordinates (corresponding to coordinates of the identity matrix $I_m$ in the upper left), the code be MDS.  But since the dual of a punctured code is the shortened code in the same coordinates, it follows that $H_3$ must be the parity-check matrix of an MDS code.  
\end{proof}


\subsection{A Simple Parity-Splitting Construction for a PMR Code when $\Delta \leq (r-1)$} \label{sec:simple_PMR_construction}

We will assume throughout the rest of the paper that $C$ is an $ [n,k]_r$ code where $(r+1)|n$ and having parameters $m,\Delta$ given by:
\bean
n & = & m(r+1) , \hspace*{0.3in} k_0 \ = \ mr , \\ 
k & = & k_0-\Delta \ = \  n-(m+\Delta).
\eean
Thus $\Delta$ represents the number of ``global'' parity checks imposed on top of the $m$ ``local'' parity checks. 

Assume that $\Delta \leq (r-1)$.  Let $H_0$ be the the $(\Delta+1 \times k_0)$ parity-check matrix of an MDS code. Let $\uxx^t$  be the  last row of $H_0$ and $H_1$ be $H_0$ with the last row deleted, i.e.,
\bean
H_0 & = & \left[ \begin{array}{c} H_1 \\ \uxx^t \end{array} \right]. 
\eean
In the construction, we will require that $H_1$ also be the parity-check matrix of an MDS code and set $\hmds=H_1$.  For example, this is the case when $H_0$ is either a Cauchy or a Vandermonde matrix.   Let $\{\ux{i}^t\}_{i=1}^m$ be the $m$ contiguous component $(1 \times r)$ vectors of $\uxx^t$ defined through  
\bean
\uxx^t \  & = & \left( \ux{1}^t \ \ux{2}^t \cdots, \ux{m}^t \right) .
\eean
Let $F$ be given by 
\bean
F & = & 
\left[ \begin{array}{cccc} 
\underline{x}_1^t & & & \\
& \underline{x}_2^t & & \\
& & \ddots & \\
& & & \underline{x}_{m}^t  \end{array} \right].
\eean

\vspace*{0.2in}

\begin{lem}
\bean
\lceil \frac{mr-\Delta}{r} \rceil & = & m - \lfloor \frac{\Delta}{r} \rfloor .
\eean
\end{lem}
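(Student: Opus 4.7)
The plan is to prove this by a direct computation using the division algorithm. Write $\Delta = qr + s$ where $q = \lfloor \Delta/r \rfloor$ and $s \in \{0, 1, \ldots, r-1\}$ is the remainder; this is the natural normal form to expose both $\lfloor \Delta/r \rfloor$ on the right-hand side and the fractional behavior governing the ceiling on the left-hand side.

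Substituting, I would rewrite the argument of the ceiling as
\beqn
\frac{mr - \Delta}{r} \ = \ \frac{mr - qr - s}{r} \ = \ (m - q) - \frac{s}{r}.
\eeqn
Now I would split into two cases according to whether $s = 0$ or $s > 0$. If $s = 0$, then $\Delta$ is an exact multiple of $r$, the expression equals the integer $m - q$, and so $\lceil \frac{mr-\Delta}{r} \rceil = m - q = m - \lfloor \Delta/r \rfloor$. If $1 \le s \le r-1$, then $\frac{s}{r} \in (0,1)$, so $(m-q) - \frac{s}{r}$ lies strictly between $m-q-1$ and $m-q$, and its ceiling is $m - q = m - \lfloor \Delta/r \rfloor$.

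There is essentially no obstacle here beyond being careful with the case $s = 0$, where the fractional part vanishes and the ceiling becomes trivial; both cases give the same answer $m - \lfloor \Delta/r \rfloor$, completing the proof. The identity is a clean arithmetic fact whose only subtlety is the standard one of remembering that $\lceil x \rceil$ drops by $1$ when $x$ is an integer compared to when it lies just above that integer — and the case split handles this cleanly.
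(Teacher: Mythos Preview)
Your proof is correct. The paper actually states this lemma without proof, treating it as an elementary arithmetic identity, so there is nothing to compare against; your division-algorithm argument with the case split on the remainder $s$ is the standard and cleanest way to verify it.
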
 
\vspace*{0.2in}

\begin{thm} [Parity-Splitting Construction] \label{thm:simple_PMR} 
The $[n,k]$ code \code having parity-check matrix $H$ given by 
\bean
H & = & \left[ \begin{array}{c|c} 
I_m & \underbrace{F}_{(m \times k_0)}   \\ \hline 
[0] & \underbrace{\hmds}_{(\Delta \times k_0)} \end{array} \right],
\eean
with $\hmds,F,\ux{i}$ as given above and $\Delta \leq (r-1)$, has locality $r$, the PMR property and minimum distance achieving the bound 
\bean
d_{\min} & = & (n-k+1) \ - \ \left( \lceil\frac{k}{r} \rceil -1 \right) \\
& = & \Delta + 2 .
\eean
\end{thm}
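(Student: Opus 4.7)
The plan is to verify locality, PMR, and the minimum-distance equality in sequence, with only the lower bound on $d_{\min}$ requiring any real work.

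For locality, the $i$-th row of the top block $[\,I_m \,|\, F\,]$ has support $\{i\} \cup B_i$, where $B_i$ denotes the $r$ right-hand coordinates occupied by $\ux{i}^t$; this is a weight-$(r{+}1)$ dual codeword covering coordinate $i$, so \code has all-symbol locality $r$. The $m$ recovery sets $L_i = \{i\} \cup B_i$ partition $[n]$, so the puncturing pattern $(1,2,\ldots,m)$ is trivially admissible. For PMR, the dual of the punctured code equals the shortened dual, and since $I_m$ sits in the first $m$ columns of $H$, shortening kills the top $m$ rows of $H$ and leaves \hmds as the parity check of the punctured code. This is MDS by hypothesis, which is exactly the PMR requirement.

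The upper bound $d_{\min} \leq \Delta + 2$ then follows from \eqref{eq:dmin} using $n-k = m+\Delta$ and $\lceil k/r\rceil = m$; the latter comes from the preceding lemma together with the assumption $\Delta \leq r-1$.

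The substantive step is the matching lower bound, which I plan to prove by showing that any $\Delta+1$ columns of $H$ are linearly independent. Fix $x$ in the null space of $H$ with $|\mathrm{supp}(x)| \leq \Delta + 1$, split $\mathrm{supp}(x) = S_1 \sqcup S_2$ with $S_1 \subseteq [m]$ and $S_2$ lying in the right block, and let $x_R$ denote the right-block portion of $x$. The bottom rows of $Hx=0$ give $\hmds\, x_R = 0$, while the top rows give $x_i = -\,\ux{i}^t\,(x_R)_{B_i}$ for each $i \in [m]$. If $|S_1| \geq 1$ then $x_R$ is supported on at most $\Delta$ coordinates, so the MDS property of \hmds forces $x_R = 0$ and then $x = 0$. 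If $|S_1| = 0$, the condition $i \notin S_1$ for every $i$ turns the top-row equations into the block identities $\ux{i}^t (x_R)_{B_i} = 0$; summing them gives the global relation $\uxx^t x_R = 0$, and stacking this with $\hmds x_R = 0$ yields $H_0\,x_R = 0$, where $H_0$ is the $(\Delta{+}1)$-row MDS parity check of the construction. Since $|S_2|=\Delta + 1$, the MDS property of $H_0$ forces $x_R = 0$ and again $x = 0$. The main obstacle is this second case, where one must notice that the block-diagonal shape of $F$ is precisely what converts the $m$ per-block equations into the extra row $\uxx^t$ needed to reconstitute $H_0$, after which the MDS argument finishes the proof.
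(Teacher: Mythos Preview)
Your proposal is correct and follows essentially the same approach as the paper. The paper's proof is phrased in column-rank / row-space language (reduce from $H$ to $H_a=[F;\hmds]$ using the $I_m$ block, then from $H_a$ to $H_0$ using that $\uxx^t$ lies in the row space of $F$), while you carry out the dual null-space argument with an explicit case split on $|S_1|$; the key step in both is the observation that the block-diagonal shape of $F$ lets one recover the extra row $\uxx^t$ and thereby invoke the MDS property of $H_0$ rather than just $\hmds$.
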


\begin{proof}
We need to show that any $(\Delta+1)$ columns of $H$ are linearly independent. From the properties of the matrix \hmds, it is not hard to see that it suffices to show that any $(\Delta+1)$ columns of 
\bean
H_a & = & \left[ \begin{array}{c} 
F  \\ \hline 
\hmds  \end{array} \right],
\eean
are linearly independent.  But the rowspace of $F$ contains the vector $\uxx^t$, hence it suffices to show that any $(\Delta+1)$ columns of 
\bean
H_b & = & \left[ \begin{array}{c} 
\hmds \\ \hline 
\uxx^t  \end{array} \right] \ = \ H_0
\eean
are linearly independent, but this is clearly the case, since $H_0$ is the parity-check matrix of an MDS code having redundancy $(\Delta+1)$. 
\end{proof}

\begin{note} 
The construction gives rise to codes having parameters $[m(r+1),mr-\Delta,\Delta+2]_r$ and hence, high rate:
\bean
R & = & 1 -\frac{\Delta+1}{m(r+1)} \ \geq \ 1 - \frac{r}{m(r+1)}.
\eean 
\end{note}

\section{A General Approach to PMR Construction} \label{sec:lower_rate_PMR}

We attempt to handle the general case 
\bean
\Delta & = & ar+b ,
\eean 
in this section and outline one approach.  At this time, we are only able to provide constructions for selected parameters with $\Delta=2r-2$ and field size that is cubic in the block length of the code and hold out hope that this construction can be generalized. 

The desired minimum distance of the PMR code (with H as given in Theorem~\ref{thm:simple_PMR} and $H_{MDS}$ chosen to be a Vandermonde matrix) can be shown to equal in this case, 
\bean
d \ := \ \dmin & = & (n-k+1) - \left( \lceil \frac{k}{r} \rceil -1 \right) \\
& = & (m+\Delta+1) - \left( \lceil \frac{mr-\Delta}{r} \rceil -1 \right) \\
& = & \Delta + 2 + a .
\eean

It follows that even the code on the right having parity-check matrix
\bean
H_{\text{pun}} & = &  \left[ \begin{array}{c} 
F  \\ \hline 
\hmds  \end{array} \right],
\eean
must have the same value of \dmin and therefore, the sub matrix formed by any $(d-1)$ columns of $H_{pun}$ must have full rank.  Let $A$ be the support of this subset of $(d-1)$ columns of $H_{pun}$. Let this support have non-empty intersection with the support of $s$ local codes and the support of the intersection with the $i$th code being $A_i$ of size $\mid A_i \mid \ = \ \ell_i$.  The corresponding sub matrix will then take on the form: 

\vspace*{0.1in}
\resizebox{0.35\vsize}{!}{
$
\left[ \begin{array}{c} 
\begin{array}{cccccccccc} 
& a_1(\theta_{1i}) & & & & & & & &  \\
& & & & a_2(\theta_{2i})  & & & & &  \\
&  & & & & & \ddots  & & &  \\
& & & & & & & & a_s(\theta_{si})&  \\ \hline 
\cdots & 1 & \cdots & \cdots & 1 & \cdots & \cdots & \cdots & 1 & \cdots \\
\cdots & \theta_{1i} & \cdots & \cdots & \theta_{2i} & \cdots & \cdots & \cdots & \theta_{si} & \cdots \\
\cdots & \theta_{1i}^2 & \cdots & \cdots & \theta_{2i}^2 & \cdots & \cdots & \cdots & \theta_{si}^2 & \cdots \\
\cdots & \cdots & \cdots & \cdots & \cdots & \cdots & \cdots & \cdots & \cdots & \cdots \\
\cdots & \theta_{1i}^{\Delta-1} & \cdots & \cdots & \theta_{2i}^{\Delta-1}  & \cdots & \cdots & \cdots & \theta_{si}^{\Delta-1} & \cdots \\
\end{array} 
\end{array} \right] ,
$
}
\vspace*{0.1in}

where $a_i(x)$ are the polynomials whose evaluations provide the local parities.
Since we want this matrix to have full rank $(d-1)$ it must be that the left null space of the matrix must be of dimension $(\Delta+s)-(\Delta+a+1)\ = \ s-(a+1)$. 
Computing the dimension of this null space is equivalent to computing the number of solutions to 
\bean
\sum_{i=1}^s c_i \sum_{j=1}^{\ell_i}a_i (\theta_{ij}) \prod_{(k,l) \neq (i,j) } \frac{(x-\theta_{kl})}{(\theta_{ij}-\theta_{kl})} & = & f(x) ,
\eean
where $f(x)$ is generic notation for a polynomial of degree $\leq (\Delta-1)$.  Let us define 
\bean
E_i(x) & = & \sum_{j=1}^{\ell_i}a_i (\theta_{ij}) \prod_{(k,l) \neq (i,j) } \frac{(x-\theta_{kl})}{(\theta_{ij}-\theta_{kl})} ,
\eean
and note that each $E_i(x)$ will in general, have degree $(\Delta+a)$. 
Consider the matrix $E$ whose rows correspond to the coefficients of $E_i(x)$.
It follows that the first $(a+1)$ columns of $E$ must have full rank.   

\subsection{Restriction to the Case $a=1$, i.e., $r \leq \Delta\leq 2r-1$}

We now assume that $a=1$ so that $(a+1)=2$ and we need the first $2$ columns of $E$ to have rank $=2$.  We consider the $(2 \times 2)$ sub matrix made up of the first two rows and first two columns of $E$.  The determinant of this $(2 \times 2)$ upper-left matrix formed of $E$ is given by
\bean
\det \left[ \begin{array}{cc} 
\sum_{j=1}^{\ell_1} \frac{a_1(\theta_{1j})}{P_{1j}} & \sum_{j=1}^{\ell_1} \frac{a_1(\theta_{1j})\left( \sum_{(k,l) \neq (1,j)} \theta_{kl} \right)}{P_{1j}} \\
\sum_{j=1}^{\ell_2} \frac{a_2(\theta_{2j})}{P_{2j}} & \sum_{j=1}^{\ell_2} \frac{a_2(\theta_{2j})\left( \sum_{(k,l) \neq (2,j)} \theta_{kl} \right)}{P_{2j}} \\
\end{array} \right] \\
 =  - \det  \left[ \begin{array}{cc} 
\sum_{j=1}^{\ell_1} \frac{a_1(\theta_{1j})}{P_{1j}} & \sum_{j=1}^{\ell_1} \frac{a_1(\theta_{1j}) \theta_{1j} }{P_{1j}} \\
\sum_{j=1}^{\ell_2} \frac{a_2(\theta_{2j})}{P_{2j}} & \sum_{j=1}^{\ell_2} \frac{a_2(\theta_{2j}) \theta_{2j}}{P_{2j}} \\
\end{array} \right] 
\eean
where 
\bean
P_{ij} & = & \prod_{(k,l) \neq (i,j)} (\theta_{ij}-\theta_{kl})
\eean

This is equal to 
\bean
 \sum_{j=1}^{\ell_1} \sum_{t=1}^{\ell_2} \frac{a_1(\theta_{1j})a_2(\theta_{2t})}{P_{1j}P_{2t}}(\theta_{1j}-\theta_{2t}).
\eean
Let $\Delta=2r-1$ and $a_1(\theta_{1j}) =   \theta_{1j}$, $a_2(\theta_{2t}) =  \theta_{2t}$, $\theta_{ij} = \xi + h_{ij}, \ h_{ij} \in \mathbb{F}_q$ and 
$\xi \  \in \  \mathbb{F}_{q^3} \setminus \mathbb{F}_q$. 
Then this becomes:
\bean
 \sum_{j=1}^{\ell_1} \sum_{t=1}^{\ell_2} \frac{ \left(\xi^2 + \xi (h_{1j}+h_{2t}) +h_{1j}h_{2t} \right)} {P_{1j}(\theta_{1j})P_{2t}(\theta_{2t})}(\theta_{1j}-\theta_{2t}) \\
  =  A\xi^2 + B\xi + C 
\eean
with $A,B,C \in \mathbb{F}_q$ which will be nonzero if the minimum polynomial of $\xi$ over $\mathbb{F}_q$ has degree $=3$, unless all the coefficients are equal to zero. \\

\paragraph{Numerical Evidence} Computer verification was carried out for the $\Delta=5,r=3$ case for $n=12$ over $F_{(2^4)^3}$ and $n=36$ over $F_{(2^6)^3}$ with $h_{ij}=\alpha^{(i-1)} \beta{(ij)}$ where $\alpha$ is the primitive element of $F_{2^4}$ and $F_{2^6}$ respectively for the two cases and $\beta(ij)$ is fifth and seventh root of unity respectively (the choice of fifth and seventh roots of unity varies for each $i,j$). For both cases, it was found that the elements $A,B,C$ never simultaneously vanished for all instances.
%

\section{Maximal Recoverable Codes} \label{sec:MR} 

\subsection{A Coset-Based Construction with Locality $r=2$} \label{sec:coset_construction} 

Since this construction is based on Construction 1 in \cite{TamBar} of all-symbol locality codes, we briefly review the latter here. 

Let $n=m(r+1)$, and $q$ be a power of a prime such that $n \leq (q-1)$, for example, $q$ could equal $(n+1)$.     Let $\alpha$ be a primitive element of $\mathbb{F}_q$ and $\beta$ an element of order $(r+1)$.  Let 
\bean
A_i & = & \alpha^{i-1} \{1,\beta,\beta^2,\cdots, \beta^r\} , \ \ 1 \leq i \leq m .
\eean
Note that $\{A_i\}_{i=1}^m$ are pairwise disjoint and partition $[n]$.  Let $k=ar+b$.  Let the supports of the local codes be $A_i, 1 \leq i \leq m$. Note that the monomial $x^{r+1}$ is constant on each of the sets $A_i$.     Let us set 
\bean
f(x) & = & \sum_{j=0}^{a-1} \sum_{i=0}^{r-1} a_{ij}x^{j(r+1)+i} \ + \   \sum_{j=a} \sum_{i=0}^{b-1} a_{ij}x^{j(r+1)+i} ,
\eean
where the second term is vacuous for $b=0$, i.e., is not present when $r \mid k$.  
Consider the code ${\cal C}$ of block length $n$ and dimension $k$ where each polynomial is associated to a distinct codeword obtained by evaluating the polynomial at the elements of $\bigcup_{i=1}^m A_i$.  This code possesses all-symbol locality and has minimum distance $d_{\min}$ satisfying \eqref{eq:dmin}. 

Note that the exponents $e$ in the monomial terms forming each polynomial $f(x)$ satisfy 
$e  \neq  r \pmod{r+1}$. 
It is this property this property that gives the code its locality properties. 

Our construction of an MR code here is based on the above construction with parameters given by $n=q-1, r=2,k=2D+1$ so that $a=D$ and $b=1$. 
Thus the local codes all have length $3$.  Let us denote the algebraic closure of $\mathbb{F}_q$ by $\mathbb{F}$. 

\begin{thm} \label{thm:neat5}
    Given positive integers $N, D$ with $\frac{2D}{N} < \frac{2}{3}$ and 
    \bean
    q & > & \Sigma_{j=2}^{2D} \lfloor{j g(j)} \rfloor {(\frac{N}{3}-1) \choose j} 3^{j} + N-2 , 
    \eean
    where 
    \bean
    g(j) & = & \left\{ \begin{array}{cc} 1 & \text{for $j$ even and $2(D-1) \geq j \geq 4$} \\
 \frac{1}{2} & \text{otherwise,} \end{array} \right.
   \eean
   there exists an $[N,k=2D+1]$ MR code with $r=2$ that is obtained from ${\cal C}$ by puncturing the code at a carefully selected set of $s \ = \ \frac{q-1}{3} - \frac{N}{3}$ cosets $\{A_{i_1}, A_{i_2}, \cdots, A_{i_s}\}$. 
\end{thm}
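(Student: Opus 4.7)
The plan is to translate the MR requirement for the punctured code into a family of polynomial non-vanishing conditions, one per admissible (information-theoretically correctable) erasure pattern, and then use a union-bound counting argument to show that the stated hypothesis on $q$ leaves room to choose the $N/3$ retained cosets so that every such condition is satisfied simultaneously.

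First, I would characterise admissibility. Since the restriction of $\mathcal{C}_0 = \mathrm{null}(H_0)$ to the retained positions is the direct sum of the local $[3,2]$ MDS codes on the kept cosets, an erasure pattern $E$ is correctable by some $k$-dimensional subcode of $\mathcal{C}_0$ exactly when $\sum_{\mu} \max(|A_{i_\mu} \cap E| - 1, 0) \leq \dim \mathcal{C}_0 - k$. For each such pattern, applying local parity in every coset carrying $\ge 2$ erasures reduces the MR requirement to the non-singularity of a minor of size $j$ of the global-parity block of the parity-check matrix of $\mathcal{C}$, evaluated at a set of residual coordinates $\theta_\mu = \alpha^{i_\mu - 1} \beta^{\ell_\mu}$; here $j$ is the number of cosets contributing residual coordinates, and ranges over $2 \le j \le 2D$.

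Next I would analyse each such minor as a polynomial in a single coset representative. Fix $j-1$ of the involved cosets and treat $\xi = \alpha^{i_\mu - 1}$ for the remaining one as the unknown. Using the Tamo-Barg monomial basis $\{x^{3t+i} : 0 \le t \le D-1,\ 0 \le i \le 1\} \cup \{x^{3D}\}$ together with $\beta^3 = 1$, one can show that the minor, viewed as a polynomial in $\xi$, has degree at most $\lfloor j \, g(j) \rfloor$: the value $g(j) = 1$ is the naive degree bound for even $j$ in the range $4 \le j \le 2(D-1)$, while $g(j) = 1/2$ records a cancellation of the top half of the coefficients that occurs when the parity of $j$ or its fringe position mismatches the monomial support of the encoding polynomial $f(x)$. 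Counting admissible patterns, for each $j$ there are at most $\binom{N/3-1}{j} 3^j$ such patterns (choose the other $j$ cosets from the $N/3-1$ remaining retained cosets and one residual position per coset). Summing over $j \in \{2, \dots, 2D\}$, a union bound shows that the forbidden values of $\xi$ number at most $\sum_{j=2}^{2D} \lfloor j g(j) \rfloor \binom{N/3-1}{j} 3^j + (N - 2)$, where the extra $N - 2$ accounts for values of $\xi$ that would coincide with already-used coordinates or produce a duplicated coset. The stated inequality on $q$ exceeds this bound, giving at least one valid choice; iterating the argument selects all $N/3$ cosets.

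The main technical obstacle is the sharp degree bound $\lfloor j \, g(j) \rfloor$ in the second step. A naive expansion of the generalised-Vandermonde-type minor only gives degree $j$ in $\xi$, and obtaining the halving for the off-range cases requires a careful analysis showing that the top $\lceil j/2 \rceil$ coefficients vanish identically — essentially a symmetry argument exploiting both the cube-root-of-unity action on $\beta$ and the precise monomial gaps in the Tamo-Barg basis. Without this sharper bound the dependence of $q$ on $N$ would be substantially larger than stated.
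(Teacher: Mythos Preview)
Your high-level plan (greedy coset selection with a union bound on bad configurations) matches the paper's, but the mechanism you describe for obtaining the coefficient $\lfloor j\,g(j)\rfloor$ is not what actually happens, and this is a genuine gap.

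In the paper the argument does not view a parity-check minor as a polynomial in one coset representative and bound its degree. It works on the encoding side: the roots $X_1,\dots,X_{3D}$ of any nonzero message polynomial satisfy $\sigma_{1}=\sigma_{4}=\cdots=\sigma_{1+3(D-1)}=0$ (call this $R(D)$). After a reduction (a proposition showing that removing a full coset from a set satisfying $R(A)$ yields a set satisfying $R(A-1)$), one fixes $2A$ roots $X_1,\dots,X_{2A}$, one per coset, and solves for the remaining $A$; the linear system in $\sigma_1(X_{2A+1},\dots,X_{3A}),\dots,\sigma_A(\cdot)$ has coefficient matrix $P(A)$ built from elementary symmetric functions of $X_1,\dots,X_{2A}$, and its non-singularity forces a \emph{unique} solution, hence exactly $A$ bad cosets to discard. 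But non-singularity of $P(A)$ (and of a companion matrix $P_1(A)$) must itself be maintained greedily, and enforcing these auxiliary conditions contributes additional discarded cosets. The quantity $\lfloor j\,g(j)\rfloor$ is the \emph{sum} of these contributions at size $j$: for even $j$ with $4\le j\le 2(D-1)$ both the primary MR count (with $2A=j$, contributing $j/2$) and the $P_1$-non-singularity count (with $2A-2=j$, contributing another $j/2$) land at that $j$, giving $j$; at every other $j$ only one of the three constraint families contributes, giving $\lfloor j/2\rfloor$.

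So the dichotomy in $g(j)$ is not a degree cancellation in a generalised Vandermonde minor; it is a bookkeeping artefact of how three separate families of constraints pile up at each $j$. Your outline gives no reason to expect the top $\lceil j/2\rceil$ coefficients of the minor to vanish, and more importantly it omits the auxiliary non-singularity conditions $R_2$ without which one cannot certify that the relevant polynomials are nonzero (and hence cannot count their roots at all). To repair the argument you would need either to introduce the $P(A)$/$P_1(A)$ machinery, or to produce an independent proof of both non-triviality and the sharp degree bound for your minors; neither is sketched.
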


\begin{proof}
Please see the Appendix  \ref{sec:MR_Proofs} .
\end{proof}

\begin{eg}
Let $k=5,n=15$. The condition in the theorem becomes $q > 499$ whereas, the optimized construction given in \cite{GopHuaJenYek} requires a field size of $2^{14}$. The construction in \cite{Chen} requires $q > {n-1 \choose k-1} = 1001$. 
\end{eg}

\subsection{Modification of Construction by Blaum et al. for $\Delta=2$} \label{sec:Blaum} 

in \cite{Blaum_3}, the authors provide a construction for an MR code (the code is referred to as a partial MDS code in their paper).  We present a modification of this construction here.  The modification essentially amounts to a different choice of finite-field elements in the construction of the parity check matrix given in \cite{Blaum_3} for the partial MDS code.  The modified parity-check matrix is provided below. 

\bean
H  & =& 
\begin{pmatrix}
H_0 & 0 & \cdots & 0 \\
0 & H_0 & \cdots & 0 \\
\vdots & \vdots & \vdots & \vdots \\
0 & 0 & \cdots & H_0 \\
H_1 & H_2 & \cdots & H_m
\end{pmatrix}, 
\eean
where 
\bean H_j  & = & 
\begin{pmatrix}
1 & \beta^{\delta} & \beta^{2\delta} & \cdots & \beta^{(r)\delta} \\
\alpha^{j-1} & \alpha^{j-1} \beta^{-1} & \alpha^{j-1} \beta^{-2} & \cdots & \alpha^{j-1}\beta^{-(r)} \\
\end{pmatrix},
\eean
and 
\bean H_0  & = & 
\begin{pmatrix}
1 & 1 & 1 & \cdots & 1 \\
1 & \beta^{1} & \beta^{2} & \cdots & \beta^{r} \\
1 & \beta^{2} & \beta^{4} & \cdots & \beta^{2r} \\
\vdots & \vdots & \vdots & \vdots & \vdots \\
1 & \beta^{\delta-1} & \beta^{2(\delta-1)} & \cdots & \beta^{r(\delta-1)} 
\end{pmatrix}.
\eean
In the above, $\alpha$ is a primitive element of $F_q$ and $\beta$ is a $\psi$th root of unity for any $\psi \geq r+1$ and hence $\psi$ divides $q-1$. Using the closed-form expression for the determinant given in  \cite{Blaum_3}, it can be seen that this construction yields an MR code with field size $q-1 \geq \psi m$. Note that the field size is independent of $\delta$.

\section{Non-Explicit Construction of MR Codes with $O(n^{\Delta-1})$ Field Size } \label{sec:non_explicit} 

   In this section we provide a construction for MR codes derived by ensuring that certain polynomial constraints which reflect the rank conditions the parity-check matrix of an MR code has to satisfy, hold.   Our starting point is the canonical form of the parity-check matrix for an MR code given in Theorem~\ref{thm:canonical_form}.  In our construction, the sub-matrix $H_{MDS}$ is fixed and we show the existence of assignment of values to the local parities corresponding to the elements of $F$ that result in an MR code.  Our approach yields improved field size in comparison with the approach in Lemma 32 of \cite{GopHuaJenYek}. 



\begin{thm} \label{thm:neat1}
    There exists a choice of $x_{ij}$ such that 
\bean
H & = & \left[ \begin{array}{c|c} 
I_m & \underbrace{F}_{(m \times k_0)}   \\ \hline 
[0]& \underbrace{\hmds}_{(\Delta \times k_0)} \end{array} \right],
\eean
\bean
F & = & 
\left[ \begin{array}{cccc} 
\underline{x}_1^t & & & \\
& \underline{x}_2^t & & \\
& & \ddots & \\
& & & \underline{x}_{m}^t  \end{array} \right]
\eean
\bean
\underline{x}_i^t & = & (x_{i1},\ x_{i1}, \cdots ,\  x_{ir})
\eean
is a maximally recoverable code for any $H_{MDS}$ with a field size of $O(n^{\Delta-1})$ (for fixed $r,\Delta$).
\end{thm}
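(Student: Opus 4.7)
The plan is to apply the Schwartz--Zippel lemma to the system of determinantal nonvanishing conditions that maximal recoverability imposes on the free entries $x_{ij}$ of $F$. By Theorem~\ref{thm:canonical_form} one may take $H$ in the displayed canonical form, so the task is to choose $\{x_{ij}\}$ making $\det(H_E)\neq 0$ for every $E\subseteq[n]$ that an MR code must correct. A standard rank argument on the locality parity-check $H_0$ restricted to $E$ shows that it suffices to check the maximal patterns with $|E|=m+\Delta$ in which $E$ meets every local group $G_j$, since all smaller admissible patterns follow by extension.

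Fix such an $E$ and set $I=E\cap\{1,\dots,m\}$ and $J=E\cap\{m+1,\dots,n\}$. First I would expand $\det(H_E)$ along the $|I|$ identity-block columns, reducing it up to sign to the determinant of a matrix $\tilde{H}_E$ whose top block consists of the $F$-rows indexed by $\bar I$ (with block-diagonal support inside each $G_i$'s data part) and whose bottom block is $\hmds$ restricted to $J$. A further Laplace expansion along the $F$-rows gives
\bean
p_E(x) \ := \ \det(\tilde{H}_E) & = & \sum_T \pm \left(\prod_{i\notin I} x_{i,g_T(i)}\right) \det\left(\hmds|_{J\setminus T}\right),
\eean
where $T$ ranges over choices of one $J$-column in each group $i\notin I$ and every $\Delta\times\Delta$ coefficient is nonzero by the MDS property of $\hmds$. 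Hence each $p_E$ is a nonzero multilinear polynomial of degree $m-|I|$ in the $x_{ij}$'s, and consequently $P(x):=\prod_E p_E(x)\not\equiv 0$.

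It remains to bound $\deg P$ and invoke Schwartz--Zippel over a field of size $>\deg P$, and this is the main obstacle. A naive accounting multiplies the per-pattern degree $m-|I|$ by the number of MR-admissible patterns, but the latter is exponential in $m$ because in each singly-erased group the erasure may fall in any of $r+1$ positions. The improvement over Lemma~32 of \cite{GopHuaJenYek} is expected to come from (i)~holding $\hmds$ \emph{fixed} so that only the $mr$ local-parity variables $x_{ij}$ are free, and (ii)~organizing admissible patterns by their \emph{excess profile} $(e_j-1)_{j=1}^m$, which satisfies $\sum_j(e_j-1)=\Delta$: for fixed $r,\Delta$ there are only $O(n^{\Delta-1})$ such profiles, and patterns sharing a profile contribute common factors to $P$ that need be multiplied in only once. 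Turning this heuristic into a rigorous $O(n^{\Delta-1})$ total-degree bound on $P$ is the technical crux; once accomplished, Schwartz--Zippel delivers an assignment over any field of size $\omega(n^{\Delta-1})$, producing the asserted MR code.
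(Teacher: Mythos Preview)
The paper itself omits the proof (``skipped for lack of space''), so there is nothing to compare against directly. Your setup through the Laplace expansion is correct: each $p_E$ is a nonzero multilinear polynomial because every $\Delta\times\Delta$ minor of $\hmds$ is nonzero. The gap is precisely where you locate it, and the heuristic you offer does not close it. The number of excess profiles $(e_j-1)_j$ with $\sum_j(e_j-1)=\Delta$ is $\binom{m+\Delta-1}{\Delta}=\Theta(m^{\Delta})=\Theta(n^{\Delta})$, not $O(n^{\Delta-1})$. Moreover, grouping by profile is not the right equivalence: what actually factors is $p_E=\pm\bigl(\prod_{i:\,e_i=1,\,i\notin I}x_{i,j_i}\bigr)\det(M_S)$, where $\det(M_S)$ depends on the full erasure data inside the ``excess'' groups $S=\{i:e_i\ge 2\}$, not merely on the profile. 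Two patterns with the same profile but different erased coordinates inside some $i\in S$ yield different $\det(M_S)$ and hence share no nontrivial common factor.

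Exploiting the monomial$\times\det(M_S)$ factorization does reduce the problem to the $O(n^{\Delta})$ conditions $\det(M_S)\neq 0$ (each of degree at most $|S|\le\Delta$) together with $x_{ij}\neq 0$, which already yields field size $O(n^{\Delta})$ via Schwartz--Zippel on $\prod_S\det(M_S)$. Shaving the remaining factor of $n$ to reach $O(n^{\Delta-1})$ plausibly requires a \emph{sequential} argument---fix the local groups one at a time and observe that only $O(m^{\Delta-1})$ of the $S$-parts have $\max(S)$ equal to the current index, each imposing a single linear constraint on that group's $r$ variables---together with an inductive verification that those linear forms are not identically zero at each stage. None of that machinery appears in your proposal, so as written it does not establish the claimed $O(n^{\Delta-1})$ field-size bound.
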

\begin{proof}
  The proof is skipped for lack of space. 
\end{proof}

The above construction can be extended in a straight forward manner to give maximal recoverable codes with field size of $O(n^{\Delta-1})$ when the matrix $F$ is made up of blocks of $\delta \times (r+1)$ local codes where we correct $\delta$ erasures in each local code.

\section*{Acknowledgment}
The authors would like to thank P. Gopalan for introducing us to this problem and for subsequent, useful discussions.

\bibliographystyle{IEEEtran}
\bibliography{isit2015_stan}

\appendices

\newpage

\section{Proofs of Theorems on Maximal Recoverability} \label{sec:MR_Proofs}

\begin{proof}[Proof of Theorem \ref{thm:neat5}]
The code $C$ has optimum minimum distance w.r.t locality $r=2$ [1]. Hence puncturing at any number of cosets (local codes) without changing k will maintain the optimum minimum distance. We say that $e$ is an admissible puncturing pattern if $e \subset [N]$ and $\mid e \cap L_i \mid \ =\ 1$, all $i$. 

Let $F$ be the algebraic closure of $F_q$. Throughout the proof whenever we say a pattern $e$ or just $e$, it refers to an admissible puncturing pattern for an $[N,k]$ code with all symbol locality $r$. Throughout the discussion any $[N,k]$  code referred to are polynomial evaluation codes and we assume that the set of evaluation positions of the $[N,k]$ code to be ordered. We use $e$ also to indicate the actual finite field elements at the positions indicated by the puncturing pattern $e$ in the set of evaluation positions of the $[N,k]$ code.

Maximal Recoverability:\\
Let $l=\frac{N}{3}$.\\
We denote an encoding polynomial of $C$ by $f(x)$ and we assume $f \neq 0$.
Let $H$ denote the cyclic group of cube roots of unity.
Let $\alpha$ be a primitive element in $F_q$.
If $\{ X_1,...X_{3D} \} \subset F$ are the roots of $f(x)$ then it must satisfy:
\bean
\sigma_1(X_1,...,X_{3D}) = 0 \\
\sigma_4(X_1,...,X_{3D}) = 0 \\
\vdots \\
\sigma_{1+3(D-1)}(X_1,...,X_{3D}) = 0
\eean
where $\sigma_i$ refers to the $i$th elementary symmetric function.
Lets denote the above set of conditions based on elementary symmetric functions on $X_1,...,X_{3D}$ by $R(D)$.

If we have a $[N,k=2D+1]$ maximally recoverable code based on the theorem and let $H_1,...H_l$ be the chosen cosets of evaluation positions for forming the codeword of the $[N,k]$ maximally recoverable code and if we puncture this $[N,k]$ code by a pattern $e$ then for the resulting $[N-l,k]$ (assuming k doesnt change after puncturing) code to be MDS we need $d_{min} = N-l-k+1 = N-2D-l$. Based on the degree of $f(x)$, we know that $d_{min} \geq N-l-deg(f)= N-l-3D$. Hence out of $3D$ roots of $f(x)$, we want atleast $D$ roots to lie outside $H_1-e(1),..,H_l-e(l)$ for any $e$. In other words its enough if we choose $l$ cosets such that for any $\{ X_1,...,X_{3D} \} \subset F$ which satisfies the condition $R(D)$, atmost only $2D$ distinct elements will lie in the chosen $l$ cosets after puncturing by any $e$. Note that this condition will also ensure that the dimension of a $N-l$ length punctured code obtained by puncturing the $[N,k]$ code by a pattern $e$ is $k$ for any $e$. If not there are 2 distinct non zero message polynomials $f_1(x),f_2(x)$ which after evaluating at $l$ cosets of evaluation positions of the $[N,k]$ code yields the same codeword after puncturing by a pattern $e$ to $N-l$ length. This means $f_1-f_2$ is another non zero message or evaluation polynomial with $N-l$ zeros in the chosen $l$ cosets after puncturing by $e$  but by the condition of choosing cosets mentioned in previous sentence (roots of $f_1-f_2$ satisfies $R(D)$) there can be atmost $2D$ distinct zeros in the $N-l$ evaluation positions. This is a contradiction as $N-l=\frac{2N}{3}>2D$ (by the condition $\frac{2D}{N}<\frac{2}{3}$ given in the theorem). Hence if we choose $l$ cosets such that for any pattern $e$ and any $2D$ distinct elements $X_1,..,X_{2D}$ from the $l$ cosets after puncturing by $e$, none of $X_{2D+1},..,X_{3D}$ from $F$ such that $X_1,..X_{3D}$ satisfies $R(D)$   which are distinct from $X_1,...,X_{2D}$ lie in the chosen cosets after puncturing by $e$ then we are done.\\

\begin{prop}
Let $S$ be a set of elements $3A$ elements from $F$ satisfying $R(A)$ and $S$ contains $\alpha^i H$ for some $ i$ then $S-\alpha^i H$ satisfies $R(A-1)$.
\end{prop}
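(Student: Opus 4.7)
The plan is to recast both conditions as coefficient conditions on polynomials in $F[x]$, and then exploit the mod-$3$ grading that is respected by multiplication by $x^3-\alpha^{3i}$.

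First I would set $p(x)=\prod_{s\in S}(x-s)$, a monic polynomial of degree $3A$ over $F$. Since $\alpha^i H=\{\alpha^i,\alpha^i\beta,\alpha^i\beta^2\}$ is precisely the root set of $x^3-\alpha^{3i}$, the hypothesis $\alpha^i H\subset S$ ensures that $q(x):=p(x)/(x^3-\alpha^{3i})$ is a polynomial of degree $3(A-1)$ whose root set is $S\setminus\alpha^i H$. Writing $p(x)=\sum_{j=0}^{3A}(-1)^j\sigma_j x^{3A-j}$, the condition $R(A)$ (that $\sigma_1=\sigma_4=\cdots=\sigma_{3A-2}=0$) is equivalent to the statement that every monomial of $p(x)$ whose degree is $\equiv 2\pmod 3$ has coefficient zero: the forbidden degrees $3A-1,3A-4,\ldots,2$ are exactly the elements of $\{0,1,\ldots,3A-1\}$ congruent to $2$ mod $3$. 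The same reformulation applies to $R(A-1)$ and $q(x)$: we must show that $q(x)$ has no monomial of degree $\equiv 2\pmod 3$.

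The key step is a mod-$3$ graded decomposition. I would write $q(x)=q_0(x)+q_1(x)+q_2(x)$, where $q_j(x)$ collects those monomials of $q$ whose degrees are $\equiv j\pmod 3$. Because $x^3$ shifts degrees by $3$ and the scalar $-\alpha^{3i}$ leaves degrees unchanged, multiplication by $x^3-\alpha^{3i}$ preserves each residue class mod $3$, so
\[
p(x) \;=\; \sum_{j=0}^{2} q_j(x)\,(x^3-\alpha^{3i})
\]
exhibits the decomposition of $p$ into its own mod-$3$ graded components, indexed by $j=0,1,2$. The hypothesis $R(A)$ forces the $j=2$ component of $p$ to vanish, giving $q_2(x)\,(x^3-\alpha^{3i})=0$ in $F[x]$. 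Since $F[x]$ is an integral domain and $x^3-\alpha^{3i}\neq 0$, we conclude $q_2(x)=0$, which is exactly $R(A-1)$ on $S\setminus\alpha^i H$.

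The only substantive content here is the twin observations that $\alpha^i H$ is the zero set of the mod-$3$ homogeneous factor $x^3-\alpha^{3i}$ and that the $R(\cdot)$ conditions are precisely the vanishing of the $\equiv 2\pmod 3$ graded component of the defining polynomial. Once this reformulation is in hand the argument reduces to cancellation in $F[x]$, so I do not anticipate any genuine obstacle beyond the verification that the forbidden index sets $\{1,4,\ldots,3A-2\}$ and $\{1,4,\ldots,3A-5\}$ really do correspond to degrees $\equiv 2\pmod 3$ in $p$ and $q$, respectively.
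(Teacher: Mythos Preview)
Your argument is correct. The paper's proof works directly with elementary symmetric functions: it uses the identity $\sigma_{1+3(i-1)}(S)=\sum_{k=0}^{3}\sigma_k(\alpha^i H)\,\sigma_{1+3(i-1)-k}(S\setminus\alpha^i H)$, observes that $\sigma_1(\alpha^i H)=\sigma_2(\alpha^i H)=0$ while $\sigma_3(\alpha^i H)\neq 0$, and then runs a backward induction on $i$ starting at $i=A$ (where the top symmetric function of $S\setminus\alpha^i H$ vanishes for degree reasons) to peel off the conditions one by one.

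Your route is the polynomial repackaging of exactly this computation, but it buys a genuine simplification: once you notice that $R(\cdot)$ is the vanishing of the degree-$\equiv 2\pmod 3$ graded piece and that $x^3-\alpha^{3i}$ is homogeneous for this grading, the induction collapses to a single cancellation $(x^3-\alpha^{3i})q_2(x)=0\Rightarrow q_2=0$ in the integral domain $F[x]$. The paper's convolution identity is precisely the coefficient-level shadow of your factorisation $p(x)=(x^3-\alpha^{3i})q(x)$, so the two arguments are equivalent in content; yours is shorter and makes the role of the coset structure (that $\alpha^i H$ is the zero locus of a polynomial supported in a single residue class) completely transparent, while the paper's version stays closer to the symmetric-function language used in the rest of the proof of the main theorem.
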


\begin{proof}
   Since $S$ satisfies $R(A)$, this implies $\sigma_{1+3(i-1)}(S) = 0$ for $ i=1,..,A$. 
\bean
\sigma_{1+3(i-1)}(S) = \sigma_{3}(\alpha^i H) \sigma_{1+3(i-1)-3}(S- \alpha^i H) + \\
 \sigma_{2}(\alpha^i H) \sigma_{1+3(i-1)-2}(S- \alpha^i H) + \\ \sigma_{1}(\alpha^i H) \sigma_{1+3(i-1)-1}(S- \alpha^i H) + \sigma_{1+3(i-1)}(S- \alpha^i H) 
\eean

\bean
\sigma_{3}(\alpha^i H) = a, \sigma_{2}(\alpha^i H) = 0, \sigma_{1}(\alpha^i H) = 0,\ \ \text{ for some } \ \ a \neq 0.
\eean
Hence
\bean
\sigma_{1+3(i-1)}(S) = & a \sigma_{1+3(i-1)-3}(S- \alpha^i H) +  \\
& \sigma_{1+3(i-1)}(S- \alpha^i H) 
\eean

For $ i=A$, \ \  $\sigma_{1+3(A-1)}(S- \alpha^i H) = 0$ as $S- \alpha^i H$ has only $3(A-1)$ elements.\\
Hence, 
\bean
\sigma_{1+3(A-1)}(S) = a \sigma_{1+3(A-1)-3}(S- \alpha^i H)
\eean
Hence
\bean
\sigma_{1+3(A-1)}(S) = 0 => \sigma_{1+3(A-1)-3}(S- \alpha^i H) = 0
\eean
for $i=A-1$,
\bean
 \sigma_{1+3(A-2)}(S) = & a \sigma_{1+3(A-2)-3}(S- \alpha^i H) + \\
&    \sigma_{1+3(A-2)}(S- \alpha^i H) 
\eean
Since, $\sigma_{1+3(A-2)}(S) = 0$ and $\sigma_{1+3(A-2)}(S- \alpha^i H) = 0$, this implies that $\sigma_{1+3(A-2)-3}(S- \alpha^i H) = 0$\\

By induction, if we assume, $\sigma_{1+3(i-1)}(S- \alpha^i H) = 0$ then since $\sigma_{1+3(i-1)}(S)=0$, we have \\
$\sigma_{1+3(i-1)-3}(S- \alpha^i H) = 0$ ($i=A$ is the starting condition of the induction which we already proved).\\
Hence $S- \alpha^i H$ satisfies $R(A-1)$.
\end{proof}

$\it{Claim}$:\\
  Its enough to choose $l$ cosets  such that for any $(A \leq D)$ and any $X_1,...,X_{2A}$(contained in the chosen $l$ cosets) which are distinct and contains atmost one element from each coset, none of the $X_{2A+1},..,X_{3A}$ from $F$ such that $X_1,...,X_{3A}$ satisfies $R(A)$, which are distinct from $X_1,...,X_{2A}$ lies in the chosen $l$ cosets after puncturing by $e$ for any $e$ disjoint from $X_1,...,X_{2A}$.

$\it{Proof}$:\\
This is because if $X_1,...,X_{3D}$ satisfying $R(D)$ contains at least 2 element from some coset $\alpha^i H$ for some $i$, since the polynomial $f_1(x)=(x-X_1)...(x-X_{3D})$ restricted to any coset is a degree $1$ polynomial, the third element from coset is also a root of $f_1$. Hence the entire coset is contained in $X_1,...,X_{3D}$ and by similar reasoning $X_1,...,X_{3D}$ can be written as $X_1,...,X_{3(D-j)} \cup \alpha^{i_1} H  \cup ... \alpha^{i_j} H$ for some $i_1,...,i_j$ where $X_1,...,X_{3(D-j)}$ contains at most one element from
each coset and satisfies $R(D-j)$ by proposition $1$.

Now by the property of the chosen cosets, we have that for any distinct $X_1,...,X_{2(D-j)}$ from the chosen $l$ cosets containing atmost one element from each coset, any of $X_{2(D-j)+1},...,X_{3(D-j)}$ which are distinct from $X_1,...,X_{2(D-j)}$ such that $X_1,...,X_{3(D-j)}$ satisfies $R(D-j)$ will not lie inside the chosen cosets after puncturing by $e$ for any $e$ such that $e \cap \{ X_1,...,X_{2(D-j)} \} = \emptyset$. Wlog this implies the chosen cosets after puncturing by any $e$ can contain atmost only (writing only distinct elements) $X_1,...,X_{2(D-j)} \cup \alpha^{i_1} H -e(i_1) \cup ... \alpha^{i_j}H-e(i_j)  $ of the $3D$ elements. Hence there can be atmost $2(D-j)+3j-j = 2D$ roots out of $3D$ roots inside the chosen cosets after puncturing by any $e$. Hence we are done.\\

From here we term a set of $l$ cosets satisfying the above claim, to be satisfying $R_1(l)$.\\
We are going put another set of conditions $R_2(l)$ on a set of $l$ cosets. The necessity of this condition will be clear in the proof.\\

$R_2(l) :$\\
A given set of $l$ cosets, is said to satisfy condition $R_2(l)$ if, \\
     For any $1 \leq A \leq D$ and any $X_1,...,X_{2A}$(contained in the chosen $l$ cosets) which are distinct and contains atmost one element from each of $l$ cosets, the matrix $P(A)$ given by

$P(A)$=
\bean
\resizebox{0.34\vsize}{!}{
$ \begin{pmatrix}
  1 & 0 & \cdots & \cdots & 0 \\
  \sigma_{3}(S) & \sigma_{2}(S) & \cdots & \cdots & \sigma_{3+1-A}(S) \\
  \vdots  & \vdots & \vdots  & \ddots & \vdots  \\
  \sigma_{3(i-1)}(S) & \sigma_{3(i-1)-1}(S) & \cdots & \cdots & \sigma_{3(i-1)+1-A}(S) \\
  \vdots  & \vdots & \vdots  & \ddots & \vdots  \\
   0 & \cdots & \sigma_{2A}(S) & \sigma_{2A-1}(S) & \sigma_{2(A-1)}(S) \\
 \end{pmatrix} $
}
\eean
is non-singular, where $S=\{X_1,...,X_{2A}\}$.\\

Furthermore, for any $3 \leq A \leq D$ and any $X_1,...,X_{2A-1}$ (contained in the chosen $l$ cosets) which are distinct and contains at most one element from each of $l$ cosets, the matrix $P_1(A)$ given by\\
$P_1(A)$= 

\resizebox{0.3\vsize}{!}{
$
 \begin{pmatrix}
  1 & 0 & \cdots & \cdots & 0 \\
  \sigma_{3}(S) & \sigma_{2}(S) & \cdots & \cdots & \sigma_{3+1-A}(S) \\
  \vdots  & \vdots & \vdots  & \ddots & \vdots  \\
  \sigma_{3(i-1)}(S) & \sigma_{3(i-1)-1}(S) & \cdots & \cdots & \sigma_{3(i-1)+1-A}(S) \\
  \vdots  & \vdots & \vdots  & \ddots & \vdots  \\
   0 & \cdots & 0 & \sigma_{2A-1}(S) & \sigma_{2(A-1)}(S) \\
 \end{pmatrix} 
$
}

is non-singular. where $S=\{X_1,...,X_{2A-1}\}$.\\

From here on we proceed to find a set of $l$ cosets satisfying $R_1(l)$ and $R_2(l)$. We proceed by choosing $1$ coset at each step inductively until we choose the required set of $l$ cosets.\\
At each step we select and add one coset to our list and throw away a collection of cosets from the cosets not chosen. Let the cosets chosen upto ith step be $G(i)$ and the cosets thrown upto ith step be $T(i)$ and let the total collection of cosets in the field $F_q$ be $W$.\\

1) The first coset is chosen to be any coset. Hence $G(1)$ consists of just the coset chosen. We don't throw away any cosets at this step. Hence $T(1)$ is empty. $G(1)$ satisfies $R_1(1)$ and $R_2(1)$ trivially.\\
\\
2) The second coset is also chosen to be any coset from $W-G(1)$. Hence $G(2)$ consists of the $2$ chosen cosets.\\
$R_1(2)$:\\
  For $A=1$, and for any $2A=2$ distinct elements $X_1,X_2$, one from each coset in $G(2)$, any $X_3$ such that $\sigma_1(X_1,X_2,X_3) = 0$ cannot be distinct from $X_1,X_2$ and lie in any of the cosets in $G(2)$. If it does, wlog let $X_1$ and $X_3$ lie in same coset which is in $G(2)$ then $X_2 = -(X_1 + X_3)$ but every coset is a coset of cube roots of unity. Hence $X + X_1 + X_3 = 0$ where X is the third element from the same coset as $X_1,X_3$. Hence $X = -(X_1 + X_3)$ which implies $X = X_2$ but X is in the same coset as $X_1,X_3$ and $X_2$ is in the other coset in $G(2)$. Hence a contradiction.\\
This implies that additive inverse of sum of $2$ distinct elements from different cosets cannot lie in the same coset as the $2$ elements.

  For $A \geq 2$, $2A \geq 4$, we need to pick 4 distinct elements, from distinct cosets but there are only 2 cosets in $G(2)$. Hence $R_1(2)$ is satisfied.\\

$R_2(2)$:\\
    For $A=1$, $P_1(1)=[1],P_2(1)=[1]$, hence non-singular.\\
For $A \geq 2$, we need to pick $2A \geq 4$ and $2A-1 \geq 3$ distinct elements from distinct cosets but there are only 2 cosets. Hence $R_2(2)$ is satisfied.\\
   
$T(2)$:\\
    For every two distinct elements $X_1,X_2$ chosen one from each of the 2 cosets in $G(2)$, find the third element $X_3$ such that $\sigma_1(X_1,X_2,X_3) = 0$ and throw away the coset in $W-(G(2))$ which contains it. Since $G(2)$ satisfies $R_1(2)$, $X_3$ will either not lie any coset in $G(2)$ or won't be distinct from $X_1,X_2$. In the first case, we throw the coset and in the latter case, we don't do anything. There are 3x3=9 possible summations $X_1+X_2$ but if $X_1+X_2+X_3 = 0$ then $\theta(X_1+X_2+X_3) = 0$ and $\theta X_3$is in the same coset as $X_3$ for any cube root of unity $\theta$. Hence solutions for 9 possible summations lie in atmost 3 cosets and we throw away these 3 cosets.\\

3) Let $i \geq 2D$ and assume we have $G(i)$ satifying $R_1(i),R_2(i)$. \\
$T(i)$:\\
    a) For every $A \leq D$,
          Choose $2A$ cosets (say $H_1,...,H_{2A}$) out of $G(i)$ cosets, and choose $X_1,...,X_{2A}$ one from each of these $2A$ cosets, now find the set of all  $X_{2A+1},...,X_{3A}$ from $F$ such that $\sigma_1(X_1,...,X_{3A}) = 0,..., \sigma_{1+3(A-1)}(X_1,...,X_{3A}) = 0$ and throw away all the cosets in which $X_{2A+1},...,X_{3A}$ lies. Since $G(i)$ satisfies $R_1(i)$, the elements in $X_{2A+1},...,X_{3A}$ will either be not distinct from $X_1,...,X_{2A}$ or will lie outside $G(i)$. In the first case we dont do anything and in the latter case, we throw away these cosets. \\
           To find the number of solutions $X_{2A+1},...,X_{3A}$ such that \\
$\sigma_1(X_1,...,X_{3A}) = 0,..., \sigma_{1+3(A-1)}(X_1,...,X_{3A}) = 0$, we solve for $X_{2A+1},...,X_{3A}$ given $X_{1},...,X_{2A}$.\\

It can be seen that to satisfy $\sigma_1(X_1,...,X_{3A}) = 0,...,\sigma_{1+3(A-1)}(X_1,...,X_{3A}) = 0$, \\
$\sigma_1(X_{2A+1},...,X_{3A}),...,$$\sigma_{A}(X_{2A+1},...,X_{3A})$ has to satsify a linear equation of the form 
\bean
P(A)[\sigma_1(X_{2A+1},...,X_{3A}),...,\sigma_{A}(X_{2A+1},...,X_{3A})]^t = \\
 -[\sigma_1(X_{1},...,X_{2A}),...,\sigma_{1+3(A-1)}(X_{2A+1},...,X_{3A})]^t 
\eean

since $G(i)$ satifies $R_2(i)$, $P(A)$ is non singular and there is a unique solution, for \\
$[\sigma_1(X_{2A+1},...,X_{3A}),...,\sigma_{A}(X_{2A+1},...,X_{3A})]$ which implies a unique solution for $X_{2A+1},...,X_{3A}$. Hence for a given distinct $X_1,...,X_{2A}$, from distinct cosets, there is a unique solution for $X_{2A+1},...,X_{3A}$ such that $\sigma_1(X_1,...,X_{3A}) = 0,..., \sigma_{1+3(A-1)}(X_1,...,X_{3A}) = 0$. Hence its enough to throw these A cosets containing  $X_{2A+1},...,X_{3A}$ (unique solution). \\

   The above procedure is done for every choice of $2A$ cosets from $G(i)$ cosets and every choice of $X_1,...,X_{2A}$ from the chosen $2A$ cosets.\\

Hence the total number of cosets thrown are atmost ${ |G(i)| \choose 2A}  3^{2A}  A$ but if for $X_1,...,X_{2A}$, $X_{2A+1},...,X_{3A}$ put together satisfies $R(A)$ then for $\theta(X_1,...,X_{2A})$, $\theta(X_{2A+1},...,X_{3A})$ (which doesn't change the cosets of $X_{2A+1},..,X_{3A}$ for any cube root of unity $\theta$) satisfies $R(A)$ and this choice is unique as seen before. Hence out of $3^{2A}$ choices for $X_1,...,X_{2A}$ from a given chosen $2A$ cosets, its enough to throw away cosets for $\frac{3^{2A}}{3}$ choices of $X_1,...,X_{2A}$.\\

Hence the total number of cosets thrown are atmost ${|G(i)| \choose 2A}  3^{2A-1}  A$.\\

      b) For every $ 3 \leq A \leq D$,
          Choose $2A-1$ cosets (say $H_1,...,H_{2A-1}$) out of $G(i)$ cosets, and choose $X_1,...,X_{2A-1}$ one from each of these $2A-1$ cosets, now find the set of all $X_{2A}$ from F such that $P(A)$ is singular. This $X_{2A}$ can't be in any coset in $G(i)$ which doesnt contain $X_1,...,X_{2A-1}$ as $G(i)$ satisfies $R_2(i)$. If $X_{2A}$ lies in the coset which contains any of $X_1,...,X_{2A-1}$, then we don't do anything. If $X_{2A}$ lies outside $G(i)$, we throw the coset.
          To find the number of solutions of $X_{2A}$ for a given $X_1,...,X_{2A-1}$ such that $P(A)$ is singular,\\
let $S_1 = \{X_1,...,X_{2A-1} \}$ and $S= \{ X_1,...,X_{2A-1},X_{2A} \}$.\\
$P(A)=$ 
\bean
\resizebox{0.37\vsize}{!}{
$ \begin{pmatrix}
  1 & 0 & \cdots & \cdots & 0 \\
  \sigma_{3}(S) & \sigma_{2}(S) & \cdots & \cdots & \sigma_{3+1-A}(S) \\
  \vdots  & \vdots & \vdots  & \ddots & \vdots  \\
  \sigma_{3(i-1)}(S) & \sigma_{3(i-1)-1}(S) & \cdots & \cdots & \sigma_{3(i-1)+1-A}(S) \\
  \vdots  & \vdots & \vdots  & \ddots & \vdots  \\
   0 & \cdots & \sigma_{2A}(S) & \sigma_{2A-1}(S) & \sigma_{2(A-1)}(S) \\
 \end{pmatrix} $
}
\eean
$=$ \\
\bean
\resizebox{0.37\vsize}{!}{
$\begin{pmatrix}
  1 & 0 & \cdots & \cdots & 0 \\
  \sigma_{3}(S_1)+\sigma_{2}(S_1)X_{2A} & \sigma_{2}(S_1)+\sigma_{1}(S_1)X_{2A} & \cdots & \cdots & \sigma_{3+1-A}(S_1)+\sigma_{3+1-A-1}(S_1)X_{2A} \\
  \vdots  & \vdots & \vdots  & \ddots & \vdots  \\
  \sigma_{3(i-1)}(S_1)+\sigma_{3(i-1-1)}(S_1)X_{2A} & \sigma_{3(i-1)-1}(S_1)+ \sigma_{3(i-1)-1-1}(S_1) X_{2A} & \cdots & \cdots & \sigma_{3(i-1)+1-A}(S_1)+\sigma_{3(i-1)+1-A-1}(S_1) X_{2A} \\
  \vdots  & \vdots & \vdots  & \ddots & \vdots  \\
   0 & \cdots & \sigma_{2A-1}(S_1) X_{2A} & \sigma_{2A-1}(S_1)+\sigma_{2A-1-1}(S_1) X_{2A} & \sigma_{2(A-1)}(S_1)+\sigma_{2(A-1)-1}(S_1) X_{2A} \\
 \end{pmatrix} $ .
}
\eean

The determinant of above matrix $P(A)$ can be seen as a polynomial in $X_{2A}$ and its degree is atmost $A-1$. The constant term of this polynomial is the determinant of following matrix:\\
\resizebox{0.35\vsize}{!}{
$\begin{pmatrix}
  1 & 0 & \cdots & \cdots & 0 \\
  \sigma_{3}(S_1) & \sigma_{2}(S_1) & \cdots & \cdots & \sigma_{3+1-A}(S_1) \\
  \vdots  & \vdots & \vdots  & \ddots & \vdots  \\
  \sigma_{3(i-1)}(S_1) & \sigma_{3(i-1)-1}(S_1) & \cdots & \cdots & \sigma_{3(i-1)+1-A}(S_1) \\
  \vdots  & \vdots & \vdots  & \ddots & \vdots  \\
   0 & \cdots & 0 & \sigma_{2A-1}(S_1) & \sigma_{2(A-1)}(S_1) \\
 \end{pmatrix}$
}

The above matrix is non-singular for $A \geq 3$ since $G(i)$ satisfies $R_2(i)$.  Hence the $det(P(A))$ as a polynomial in $X_{2A}$ is a non-zero polynomial (has a non zero constant term), and since its degree is atmost $A-1$, it can have atmost $A-1$ solutions for $X_{2A}$. Hence its enough to throw away these $A-1$ cosets containing these $A-1$ solutions.\\
The above procedure is done for every choice of $2A-1$ cosets from $G(i)$ cosets and every choice of $X_1,...,X_{2A-1}$ from the chosen $2A-1$ cosets.\\

The number of cosets thrown are atmost: $ { |G(i)| \choose 2A-1} 3^{2A-1} (A-1)$. It can be seen that $det(P(A))$ is a homogenous polynomial in $X_1,...,X_{2A}$ and hence as before if for $X_1,...,X_{2A-1},X_{2A}$, $det(P(A)) = 0$ then for $\theta(X_1,...,X_{2A-1},X_{2A})$ also $det(P(A)) = 0$. Hence its enough to throw away atmost: ${|G(i)| \choose 2A-1} 3^{2A-2} (A-1)$.\\

For $A=1$, $P(A) = [1]$ which is trivially non- singular and we don't do anything. For $A=2$, choose 3 cosets from $G(i)$ and choose distinct $X_1,X_2,X_3$ one from each of these distinct cosets, now find the set of all $X_{4}$ such that $P(A)$ is singular. This $X_{4}$ can't be in any coset in $G(i)$ which doesnt contain $X_1,...,X_{3}$ as $G(i)$ satisfies $R_2(i)$. If $X_{4}$ lies in the coset which contains any of $X_1,...,X_{3}$, then we don't do anything. If $X_{4}$ lies outside $G(i)$, we throw the coset.
          To find the number of solutions of $X_{4}$ for a given $X_1,...,X_{3}$ such that $P(A)$ is singular, \\
\bean
det(P(A)) = \sigma_2(X_1,X_2,X_3,X_4) = \\
\sigma_2(X_1,X_2,X_3) + X_4 \sigma_1(X_1,X_2,X_3) 
\eean
Given the chosen $X_1,X_2,X_3$, the above expression for $det(P(A))$ can be seen as a linear expression in $X_4$. if $\sigma_2(X_1,X_2,X_3)=0,\sigma_1(X_1,X_2,X_3) = 0$ then $(X-X_1)(X-X_2)(X-X_3) = X^3 - X_1 X_2 X_3 = X^3 - \gamma$. Here $X_1,X_2,X_3$ constitutes the solution set for $X^3 = \gamma$ but $X_1 H$  also constitutes 3 solutions for the equation $X^3 = \gamma$ but there can be atmost 3 solutions for the equation $X^3 = \gamma$. Hence $X_1 H = \{X_1,X_2,X_3\}$ which implies they all belong to same coset which is a contradiction. Hence either $\sigma_2(X_1,X_2,X_3) \neq 0$ or $\sigma_1(X_1,X_2,X_3) \neq 0$ which implies $det(P(A))$ is a non zero degree 1 polynomial in $X_4$. Hence we can find the solution and throw away the coset containing it.
      
The number of cosets thrown are atmost: ${|G(i)| \choose 3} 3^{3} $ but by similar argument as before we can see that the number of cosets thrown are atmost: ${ |G(i)| \choose 3} 3^{2} $

c) For every $3 \leq A \leq D$,
          Choose $2A-2$ cosets (say $H_1,...,H_{2A-2}$) out of $G(i)$ cosets, and choose $X_1,...,X_{2A-2}$ one from each of these $2A-2$ cosets, now find the set of all $X_{2A-1}$ from F such that $P_1(A)$ is singular. This $X_{2A-1}$ can't be in any coset in $G(i)$ which doesnt contain $X_1,...,X_{2A-2}$ as $G(i)$ satisfies $R_2(i)$. If $X_{2A-1}$ lies in the coset which contains any of $X_1,...,X_{2A-2}$, then we don't do anything. If $X_{2A-1}$ lies outside $G(i)$, we throw the coset.

 To find the number of solutions of $X_{2A-1}$ for a given $X_1,...,X_{2A-2}$ such that $P_1(A)$ singular,\\
let $S_1 = \{ X_1,...,X_{2A-2} \}$ and $S= \{ X_1,...,X_{2A-2},X_{2A-1} \}$.\\

$P_1(A)$= 
\bean
\resizebox{0.35\vsize}{!}{
$\begin{pmatrix}
  1 & 0 & \cdots & \cdots & 0 \\
  \sigma_{3}(S) & \sigma_{2}(S) & \cdots & \cdots & \sigma_{3+1-A}(S) \\
  \vdots  & \vdots & \vdots  & \ddots & \vdots  \\
  \sigma_{3(i-1)}(S) & \sigma_{3(i-1)-1}(S) & \cdots & \cdots & \sigma_{3(i-1)+1-A}(S) \\
  \vdots  & \vdots & \vdots  & \ddots & \vdots  \\
   0 & \cdots & 0 & \sigma_{2A-1}(S) & \sigma_{2(A-1)}(S) \\
 \end{pmatrix}$
}
\eean
$= $
\bean
\resizebox{0.39\vsize}{!}{
$\begin{pmatrix}
  1 & 0 & \cdots & \cdots & 0 \\
  \sigma_{3}(S_1)+\sigma_{2}(S_1)X_{2A-1} & \sigma_{2}(S_1)+\sigma_{1}(S_1)X_{2A-1} & \cdots & \cdots & \sigma_{3+1-A}(S_1)+\sigma_{3+1-A-1}(S_1)X_{2A-1} \\
  \vdots  & \vdots & \vdots  & \ddots & \vdots  \\
  \sigma_{3(i-1)}(S_1)+\sigma_{3(i-1-1)}(S_1)X_{2A-1} & \sigma_{3(i-1)-1}(S_1)+ \sigma_{3(i-1)-1-1}(S_1) X_{2A-1} & \cdots & \cdots & \sigma_{3(i-1)+1-A}(S_1)+\sigma_{3(i-1)+1-A-1}(S_1) X_{2A-1} \\
  \vdots  & \vdots & \vdots  & \ddots & \vdots  \\
   0 & \cdots & 0 & \sigma_{2A-1-1}(S_1) X_{2A-1} & \sigma_{2(A-1)}(S_1)+\sigma_{2(A-1)-1}(S_1) X_{2A-1} \\
 \end{pmatrix}$ .
}
\eean
The determinant of above matrix $P_1(A)$ can be seen as a polynomial in $X_{2A-1}$ and its degree is atmost $A-1$. The constant term of this polynomial is the determinant of following matrix:
\bean
\resizebox{0.35\vsize}{!}{
$\begin{pmatrix}
  1 & 0 & \cdots & \cdots & 0 \\
  \sigma_{3}(S_1) & \sigma_{2}(S_1) & \cdots & \cdots & \sigma_{3+1-A}(S_1) \\
  \vdots  & \vdots & \vdots  & \ddots & \vdots  \\
  \sigma_{3(i-1)}(S_1) & \sigma_{3(i-1)-1}(S_1) & \cdots & \cdots & \sigma_{3(i-1)+1-A}(S_1) \\
  \vdots  & \vdots & \vdots  & \ddots & \vdots  \\
   0 & \cdots & 0 & 0 & \sigma_{2(A-1)}(S_1) \\
 \end{pmatrix}$
}
\eean
Now $\sigma_{2(A-1)}(S_1) \neq 0$ (because this is just the product of $2(A-1)$ non zero elements) , since the determinant of the matrix:
\bean
\resizebox{0.35\vsize}{!}{
$\begin{pmatrix}
  1 & 0 & \cdots & \cdots & 0 \\
  \sigma_{3}(S_1) & \sigma_{2}(S_1) & \cdots & \cdots & \sigma_{3+1-(A-1)}(S_1) \\
  \vdots  & \vdots & \vdots  & \ddots & \vdots  \\
  \sigma_{3(i-1)}(S_1) & \sigma_{3(i-1)-1}(S_1) & \cdots & \cdots & \sigma_{3(i-1)+1-(A-1)}(S_1) \\
  \vdots  & \vdots & \vdots  & \ddots & \vdots  \\
   0 & \cdots & \sigma_{2(A-1)}(S_1) & \sigma_{2(A-1)-1}(S_1) & \sigma_{2(A-2)}(S_1) \\
 \end{pmatrix}$
}
\eean
is non zero (because $G(i)$ satisfies $R_2(i)$), we have that the determinant of the matrix mentioned before corresponding to the constant term of the polynomial $det(P_1(A))$ is also non zero. Hence the $det(P_1(A))$ as a polynomial in $X_{2A-1}$ is a non-zero polynomial (has a non zero constant term), and since its degree is atmost $A-1$, it can have atmost $A-1$ solutions for $X_{2A-1}$. Hence its enough to throw away these $A-1$ cosets containing these $A-1$ solutions.\\
The above procedure is done for every choice of $2A-2$ cosets from $G(i)$ cosets and every choice of $X_1,...,X_{2A-2}$ from the chosen $2A-2$ cosets.\\

The number of cosets thrown are atmost: $ {|G(i)| \choose 2A-2} 3^{2A-2} (A-1)$. It can be seen that $det(P_1(A))$ is a homogenous polynomial in $X_1,...,X_{2A-1}$ and hence as before if for $X_1,...,X_{2A-2},X_{2A-1}$, $det(P_1(A)) = 0$ then for $\theta(X_1,...,X_{2A-2},X_{2A-1})$ also $det(P_1(A)) = 0$. Hence its enough to throw away atmost: $ { |G(i)| \choose 2A-2} 3^{2A-3} (A-1)$.\\

4) Following the previous step, we want to select one more coset to form $G(i+1)$ such that it satisfies $R_1(i+1),R_2(i+1)$ :\\
   Choose any coset (say) $H_1$ from the collection $W- (T(i) \cup G(i))$. Hence $G(i+1) = G(i) \cup \{ H_1 \}$. It can be easily shown that $G(i+1)$ satisfies $R_1(i+1),R_2(i+1)$ using the properties of $T(i)$ and $G(i)$. We skip the proof due to space constraints.

5) The argument for throwing cosets for $i<2D$ is similar to the above arguments (point 3) except that we skip the parts where it becomes vacuous. The procedure for selecting new coset to form $G(i)$ and showing that it satisfies $R_1(i)$ and $R_2(i)$ can be done in a straight forward manner.\\

We repeat the steps 3 and 4 until we pick $l$ cosets. Note that the set of cosets thrown away at $i$th step contains the set of cosets thrown away at $i-1$ th step.\\

Hence the total number of cosets thrown until $i$th step is (from the step 3):

\bean
|T(i)| \leq \Sigma_{j=1}^{D} { |G(i)| \choose 2j}  3^{2j-1}  j + \Sigma_{j=3}^{D} { |G(i)| \choose 2j-1} 3^{2j-2} (j-1) + \\
 {|G(i)| \choose 3} 3^{2} + \Sigma_{j=3}^{D} { |G(i)| \choose 2j-2} 3^{2j-3} (j-1)
\eean
we can pick $i+1$ th coset to form $G(i+1)$ as long as $|T(i)| + |G(i)| < |W|$. Hence we can pick $l=\frac{n}{3}$ cosets (evaluating positions) to form maximally recoverable code of block length $n$ as long as $|T(l-1)| + |G(l-1)| < |W|$. $|W| = \frac{q-1}{3}$. Hence we can form $[n,k=2D+1]$ maximally recoverable code as long as:

\resizebox{0.3\vsize}{!}{
$
\Sigma_{j=1}^{D} { |G(l-1)| \choose 2j}  3^{2j-1}  j + \Sigma_{j=3}^{D} { |G(l-1)| \choose 2j-1} 3^{2j-2} (j-1) + $ 
}
\\
\resizebox{0.3\vsize}{!}{
$ { |G(l-1)| \choose 3} 3^{2} + \Sigma_{j=3}^{D} { |G(l-1)| \choose 2j-2} 3^{2j-3} (j-1) + |G(l-1)| < \frac{q-1}{3} $

}

Using $|G(i)| = i$, it can be seen that the above inequality is implied by:\\

\bean
\Sigma_{j=2}^{2D} \lfloor{j g(j)} \rfloor {l-1 \choose j} 3^{j-1} + (l-1)  < \frac{q-1}{3}  
\eean
$g(j) = 1$ for $2(D-1) \geq j \geq 4$ and j even \\
$g(j) = \frac{1}{2}$ otherwise \\

hence:
\bean
\Sigma_{j=2}^{2D} \lfloor{j g(j)} \rfloor {(\frac{n}{3}-1)\choose j} 3^{j} + n-2  < q  
\eean
\end{proof}
\end{document}